\newcommand{\algorithmfootnote}[2][\footnotesize]{%
  \let\old@algocf@finish\@algocf@finish
  \def\@algocf@finish{\old@algocf@finish
    \leavevmode\rlap{\begin{minipage}{\linewidth}
    #1#2
    \end{minipage}}%
  }%
}
\DeclareMathOperator*{\argmax}{argmax}
\newcommand*{\argmaxl}{\argmax\limits}
\DeclareMathOperator{\proj}{proj}
\DeclareMathOperator{\rank}{rank}
\theoremstyle{plain}
\newtheorem{theorem}{Theorem}
\newtheorem{lemma}[theorem]{Lemma}
\begin{document}
%
\title{QR Approximation for Massive MIMO Fronthaul Compression}
\vspace{0.05cm}
\author{\IEEEauthorblockA{Aswathylakshmi P and Radha Krishna Ganti\\}
    \IEEEauthorblockA{Department of Electrical Engineering\\
                      Indian Institute of Technology Madras\\
                      Chennai, India 600036\\
                      \{aswathylakshmi, rganti\}@ee.iitm.ac.in}
}



%


\maketitle

\begin{abstract}
\boldmath
Massive MIMO's immense potential to serve large number of users at fast data rates also comes with the caveat of requiring tremendous processing power. This favours a centralized radio access network (C-RAN) architecture that concentrates the processing power at a common baseband unit (BBU) connected to multiple remote radio heads (RRH) via fronthaul links. The high bandwidths of 5G make the fronthaul data rate a major bottleneck. Since the number of active users in a massive MIMO system is much smaller than the number of antennas, we propose a dimension reduction scheme based on low rank QR approximation for fronthaul data compression. Link level simulations show that the proposed method achieves more than 17$\times$ compression while also improving the error performance of the system through denoising. 
\end{abstract}

\begin{IEEEkeywords}
Massive MIMO, C-RAN, fronthaul, low rank QR approximation, denoising gain, functional split
\end{IEEEkeywords}

%
\IEEEpeerreviewmaketitle
\section{Introduction}
Massive MIMO base station, with its large number of antennas, has the ability to support many users simultaneously through spatial multiplexing. This improves spectral efficiency and increases the network capacity. However, the huge processing complexity that such a system entails makes the centralized radio access network (C-RAN) architecture a better choice. In C-RAN architecture, the base station is split into two parts: a pooled baseband unit (BBU) at a centralized location, and connected to several remote radio heads (RRH) distributed geographically, as shown in Fig.1. The pooling of baseband resources provides more processing power and with its potential for cooperative radio to reduce interference, C-RAN can allow a higher density of RRHs to be put in place at low additional costs to the network operators \cite{Checko2015}. Thus massive MIMO combined with the C-RAN architecture can potentially support the ultra-high data rates envisioned in 5G. But the tight latency constraints and high bandwidth of 5G impose a huge capacity demand on the fronthaul links between the BBU and RRH. For example, with 64 RRH antennas, the low-PHY functional split between the BBU and RRH requires a fronthaul data rate upto 236 Gbps for 100MHz bandwidth, with the latest standard of the Common Public Radio Interface, eCPRI\cite{ecpri2017}. Laying such high capacity optical fibres for each BBU-RRH link would drive up the cost too much for the network operators, therefore compression techniques become necessary. 

\begin{figure}[t!]
    \centering
    \includegraphics[scale=0.25]{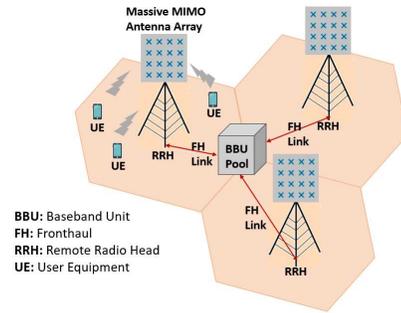}
    \caption{Massive MIMO C-RAN architecture: Centralized BBU connected to multiple RRHs with massive number of antennas}
    \label{fig:my_label}
\end{figure}

\par{In \cite{Peng2016}, four main approaches to uplink fronthaul compression are reviewed: point-to-point (P2P) compression, distributed source coding, compressed sensing (CS) and spatial filtering. While P2P compression and spatial filtering have low implementation complexity, if the signals received at different RRHs are highly correlated, distributed source coding performs better. CS based compression utilizes the sparsity of the uplink signals. In \cite{drvenica2016}, a lossy compression algorithm is  explored that applies FFT and Discrete  Cosine Transform (DCT) to the received signals and then discards low power frequency coefficients. The main drawback of this system is the need to divide the antenna array into many groups and apply the processing separately to each of them.}

\par{The Principal Component Analysis (PCA) compression algorithm proposed in \cite{Choi2016} uses the inherent sparsity of MIMO channels to reduce the number of links required in the fronthaul. It performs a low-rank approximation of the matrix consisting of the received signals by leveraging the signal correlation across space and time. But this requires computing the singular value decomposition (SVD) of the matrix, whose complexity is prohibitively high for large matrix dimensions as in the massive MIMO case, since the RRH has limited processing resources.}

\par{Another aspect that can drastically affect the data rate in the fronthaul link is the functional split between the BBU and RRH. The impact of the different functional splits (Fig.2) on the fronthaul rate and latency is evaluated in \cite{Chang2016}. The data rate is almost halved when moving from split A to B, as the cyclic prefix (CP) and guard bands are removed. The gains in moving to split C, where resource elements (RE) are demapped, is dependent on resource block utilization while for split D, it depends on the modulation order. For large modulation orders, split D can actually increase the data rate, as more bits are required to represent each sample. Finally, more than 90 percent reduction in the data rate can be achieved with split E compared to split A, but this comes at the cost of requiring all PHY layer processing to be at the RRHs increasing their complexity and decreasing flexibility. Though the data rate decreases from split A to E, the required control information increases.}

\begin{figure}[t!]
\vspace{3mm}
    \centering
    \includegraphics[scale=0.35]{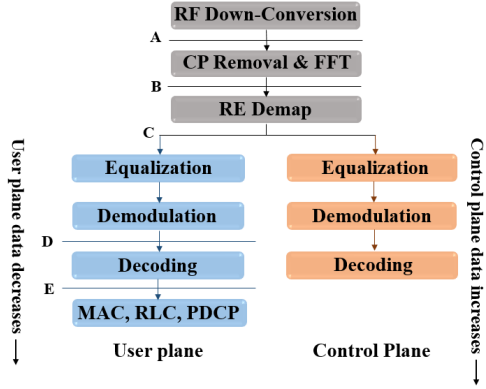}
    \caption{Possible functional splits between BBU and RRH in uplink}
    \label{fig:my_label}
\end{figure}

\par{In this paper, we propose to reduce the uplink fronthaul data rate in two stages. We choose split C in Fig.2, whereby CP and guard band removal and RE demapping is completed at the RRH. This reduces the data rate by almost half. To achieve further compression, we propose a low complexity algorithm based on QR decomposition for low rank approximation of the matrix composed of the complex baseband received signals. In addition to achieving high compression ratios, the proposed method provides denoising gain leading to better error performance compared to an uncompressed system.} 

\section{System Model}
We now provide a detailed description of the system model used in this paper. Consider a massive MIMO 5G base station with $N_r$ antennas at the RRH receiving signals from single antenna users in the uplink. We assume $N_u$ users in the system. In 5G, the uplink multiple access scheme is OFDMA \cite{38101}. Therefore, the bit-stream from each user undergoes M-QAM symbol mapping followed by OFDM modulation. OFDM modulation consists of sub-carrier mapping according to the resources allocated to the user, IFFT, and addition of a cyclic prefix (CP). These OFDM symbols pass through multi-path channel before reaching the RRH of the base station. The received signal at antenna $r$ at sampling instant $n$ is
\begin{equation}
y_{r}[n] = \sum_{u=1}^{N_u} x_{u}[n] \circledast h_{r,u}[n] + w_{r}[n] ,  
\end{equation}
where $x_u$ is the OFDM symbol from user $u$, $h_{r,u}$ is the multi-path channel response from user $u$ to antenna $r$,  $x_{u}[n] \circledast h_{r,u}[n]$ represents the convolution output between OFDM symbols of user $u$ and the multi-path channel response from user $u$ to antenna $r$, and $w_r$ is the additive white Gaussian noise (AWGN) with variance $\sigma$ at antenna $r$. 
\par{For our compression algorithm, we consider a block of $N$ time-domain samples received at the RRH antennas. We assume that the channel remains constant for the duration of these $N$ samples. The received signal matrix $\mathbf{Y}$ at the RRH is

\begin{equation*}
\mathbf{Y}=
  \begin{bmatrix}
   y_{1}[1] & y_{2}[1] & . & . & . & y_{N_r}[1] \\
    y_{1}[2] & y_{2}[2] & . & . & . & y_{N_r}[2]\\
    . & . & .  &   &   &. \\
    . & . &   &  . &   &. \\
    . & . &   &   & .  &. \\
    y_{1}[N] & y_{2}[N] & . & . & . & y_{N_r}[N]\\
  \end{bmatrix}_{N\times N_r} .
\end{equation*}Here, each column of $\mathbf{Y}$ represents the signal received at each antenna over a time span of $N$ samples.}

\section{Low Rank QR Approximation}

\begin{figure*}[h!]
    \centering
    \captionsetup{justification=centering}
    \includegraphics[scale=0.5]{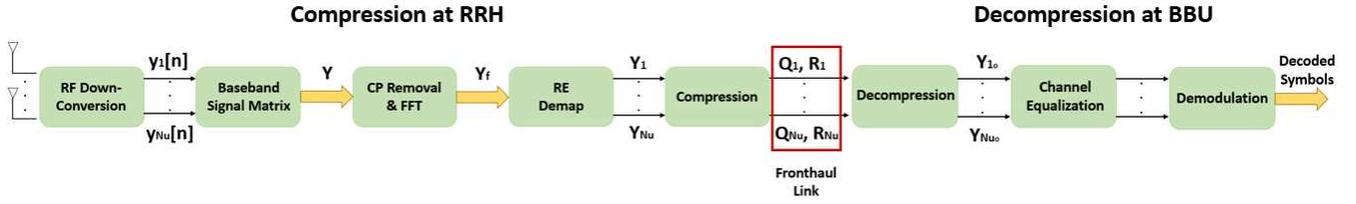}
    \caption{Proposed Compression Scheme}
    \label{fig:my_label}
\end{figure*}

We need to send the received signal matrix $\mathbf{Y}$ from the RRH to the BBU via the fronthaul link. Since the dimension of $\mathbf{Y}$, $N\times N_r$, is large in a massive MIMO setting, we aim to reduce its dimension to achieve compression. Assuming a maximum of $L$ multi-paths for each user, expanding (1), we have

\begin{equation}
\mathbf{Y}=
  \Big(\sum_{i=1}^{L}\mathbf{H_{i}}\mathbf{X_{i}} + \mathbf{W}\Big)^{T}.
\end{equation}

Here, 
\begin{equation*}
\mathbf{H_{i}}=
  \begin{bmatrix}
   h_{1,1}^{i} & h_{1,2}^{i} & . & . & . & h_{1,N_u}^{i} \\
    h_{2,1}^{i} & h_{2,2}^{i} & . & . & . & h_{2,N_u}^{i}\\
    . & . & .  &   &   &. \\
    . & . &   &  . &   &. \\
    . & . &   &   & .  &. \\
    h_{N_r,1}^{i} & h_{N_r,2}^{i} & . & . & . & h_{N_r,N_u}^{i}\\
  \end{bmatrix}_{N_{r}\times N_{u}}
\end{equation*}
is the matrix of complex channel gains for the $i^{th}$ multi-path,
\begin{small}
\begin{equation*}
\mathbf{X_{i}}=
  \setlength{\arraycolsep}{2pt}
  \renewcommand{\arraystretch}{0.8}
  \begin{bmatrix}
   x_{1}[1-i_1]&x_{1}[2-i_1]&.&.&.&x_{1}[N-i_1] \\
    x_{2}[1-i_2]&x_{1}[2-i_2]&.&.&.&x_{1}[N-i_2]\\
    .&.&. &  &  &. \\
    .&.&  & .&  &. \\
    .&.&  &  &. &. \\
    x_{N_u}[1-i_{N_u}] & x_{N_u}[2-i_{N_u}] & . & . & . & x_{N_u}[N-i_{N_u}]\\
  \end{bmatrix}_{N_{u}\times N} 
\end{equation*}
\end{small}is the matrix composed of transmitted symbols from the users, with user $u$'s symbols passing through the $i^{th}$ multi-path with delay $i_u$, and $\mathbf{W}$ is the $N_{r}\times N$ matrix of complex AWGN at the RRH. 
From (2), we observe that the received symbols at each antenna are correlated across time due to the correlation between the different $\mathbf{X_i}$'s. Jakes' one-ring model in \cite{jakes1994} shows that the channel gains in $\mathbf{H_i}$'s are also correlated as a function of the spacing and arrangement of the antennas at the receiver. Thus, the samples in $\mathbf{Y}$ have both spatial and temporal correlation, making $\mathbf{Y}$ a low-rank matrix. It is this nature of $\mathbf{Y}$ that we exploit to reduce its dimensions by applying a low-rank approximation. In particular, we choose QR decomposition to achieve this, which is a widely used algorithm that is both numerically stable and has a lower computational complexity than SVD \cite{sharma2013} because the RRH is resource constrained. We describe the procedure for obtaining low rank approximation of a matrix using QR decomposition below.
\par{Suppose $\mathbf{A_0}$ is the low rank approximation of a noisy matrix $\mathbf{A}$. We obtain $\mathbf{A_0}$ through QR decomposition using the Gram-Schmidt orthogonalization process. Then, $\mathbf{A_0} = \mathbf{QR}$, where $\mathbf{Q}$ is the matrix consisting of orthogonal basis vectors for the columns of $\mathbf{A}$, and $\mathbf{R}$ is an upper triangular matrix containing the projections of the columns of $\mathbf{A}$ onto these basis vectors. If we choose a subset containing $L$ columns of $\mathbf{A}$ to form a truncated basis, then $\mathbf{R}$ is upper triangular only upto column $L$. The choice of $L$ depends on the true rank of the noisy matrix $\mathbf{A}$, which can be determined using various methods \cite{kritchman2008}. Here we choose the $L$ columns according to their vector norms. In the case of the received signal matrix $\mathbf{Y}$, the norm of each column represents the total power received at the corresponding antenna. Therefore, we choose the antennas with the highest received powers to form the matrix of basis vectors, $\mathbf{Q}$. We assume $N_r < N$ and hence, the true rank of $\mathbf{Y}$ will be less than or equal to $N_r$, the $\min\{N, N_r$\}.}

\renewcommand\footnoterule{}      
\begin{algorithm}[b!] 
\caption{Low-rank approximation with QR decomposition}\footnotetext{$\mathbf{y_{u_n}}$ denotes column vector $n$ of $\mathbf{Y_u}$, $\mathbf{q_{i}}$ and $\mathbf{r_{i}}$ denote column vector $i$ of $\mathbf{Q_u}$ and $\mathbf{R_u}$, respectively. $\mathbf{e_{1}^{k}}$ denotes column vector of length $k$ with first component 1 and rest 0s.}
\label{alg:loop}
\begin{algorithmic}[1]
\For{$u \gets 1 \hspace{0.1cm} to \hspace{0.1cm} N_u$}
 \State {$\mathbf{q_{1}}\gets \argmaxl_{\mathbf{y_{u_n}}}\|\mathbf{y_{u_n}}\|_{2}, n\in\{1,2,...,N_r\}$}
 \State {$\mathbf{r_1} \gets \mathbf{e_{1}^{L_u}}$}
 \State {$\mathbf{Y_{u}^{1}} \gets \mathbf{Y_{u}} \backslash \mathbf{q_{1}}$}
 \For{$i \gets 2 \hspace{0.1cm} to \hspace{0.1cm} L_u$}    
  \State {$\mathbf{p_{i}}\gets \argmaxl_{\mathbf{y_{u_n}^{i-1}}}\|\mathbf{y_{u_n}^{i-1}}\|_{2}, n\in\{1,2,...,N_{r}-i+1\}$}
  \State {$\mathbf{q_i}=\mathbf{p_{i}}-\sum_{j=1}^{i-1}\proj_{\mathbf{p_j}}\mathbf{p_{i}}$}
  \State {$\mathbf{r_i}=\Big(\proj_{\mathbf{p_1}}\mathbf{p_{i}}...\proj_{\mathbf{p_{(i-1)}}}\mathbf{p_{i}} \hspace{0.2cm} \big(\mathbf{e_{1}^{(L_{u}-i+1)}}\big)^T\Big)^T$}
  \State {$\mathbf{Y_{u}^{i}} \gets \mathbf{Y_{u}^{i-1}} \backslash \mathbf{p_{i}}$}
 \EndFor
 \For {$i \gets L_{u}+1 \hspace{0.1cm} to \hspace{0.1cm} N_r$}
  \State {$\mathbf{p_{i}}\gets \mathbf{y_{u_{(i-L_u)}}^{L_u}}$}
  \State {$\mathbf{r_i}=\Big(\proj_{\mathbf{p_1}}\mathbf{p_{i}} \hspace{0.2cm} \proj_{\mathbf{p_{2}}}\mathbf{p_{i}}...\proj_{\mathbf{p_{L_u}}}\mathbf{p_{i}}\Big)^T$}
 \EndFor
 \State {$\mathbf{Q_u} \gets [ \mathbf{q_1}, \mathbf{q_2}, ..., \mathbf{q_{L_u}}], \mathbf{R_u} \gets [ \mathbf{r_1}, \mathbf{r_2}, ..., \mathbf{r_{N_r}}]$}
 \State \Return {$\mathbf{Q_u}, \mathbf{R_u}$}
\EndFor
\end{algorithmic}
\end{algorithm}

\section{Proposed Compression Method}
We apply the approximation described above to the received signal matrix $\mathbf{Y}$. We first remove the CP and guard-bands before applying the approximation, since these do not need to be sent to the BBU. For different users, the set of RRH antennas that offer the best signal-to-noise ratios (SNRs) will be different, as this depends on the location and orientation of the user with respect to the RRH. However, if we apply the algorithm to $\mathbf{Y}$ in the time-domain, where the users are not separated, the set of antennas chosen as the basis will be common to all users. Since we choose the antennas based on their total received powers, the users nearer to the RRH that contribute more power, will be favoured over the users farther from the RRH. In order to avoid this, we convert $\mathbf{Y}$ to the frequency domain by applying FFT so that we can separate the users according to the sub-carriers allocated to them. The sub-carrier allocation of each user is known to the base station. Then we choose the best antennas for each user to form the basis. 
\par{The process of compression at the RRH and decompression at the BBU is illustrated in Fig. 3. We explain each of the steps in detail below. After RF down-conversion, we construct the baseband signal matrix $\mathbf{Y}$ using the signals received at the $N_r$ antennas over a time span of $N$ symbols. Therefore, $\mathbf{Y}$ is of dimension $N\times N_r$. Without loss of generality, we choose $N$ to be the duration of one OFDM symbol. We first remove the CP and apply FFT to $\mathbf{Y}$ to convert it to the frequency domain signal matrix $\mathbf{Y_f}$. If the total number of sub-carriers allocated to all the users is $N_f$, then $\mathbf{Y_f}$ is of dimension $N_f \times N_r$. Next we perform resource element (RE) demapping, which separates the signals from the different users. For this, we divide $\mathbf{Y_f}$ into sub-matrices corresponding to different users according to their allocated sub-carriers. We denote the sub-matrix of user $u$ as $\mathbf{Y_u}$. If $N_{f_{u}}$ is the number of sub-carriers allotted to user $u$, then $\mathbf{Y_u}$ will be of dimension $N_{f_{u}}\times N_r$. Due to the antenna correlation described before, and assuming $N_{r}<N_{f_u}$, the true rank of $\mathbf{Y_u}$ will be less than $N_r$ and equal to the number of independent multi-paths in the channel for user $u$. We now apply the QR compression algorithm described in Algorithm 1 to each $\mathbf{Y_u}$. Thus for each user $u$, the low-rank approximated matrix $\mathbf{Y_{u_0}} = \mathbf{Q_{u}R_{u}}$. 
We choose $L_u$ antennas having the highest received powers from $\mathbf{Y_u}$ to form the columns of $\mathbf{Q_u}$, where $\rank(\mathbf{Y_u}) \leq L_u \leq N_r$.}

\begin{lemma}
For the system and compression scheme described above, the fronthaul compression ratio (CR) is given by
\begin{equation}
CR = \frac{N N_{r}b_{Q}}{\sum_{u=1}^{N_u}L_{u}(N_{f_u}+N_{r})b_{Q} + N_{u} N_{r}\log_{2}N_{r}}.
\end{equation}
\end{lemma}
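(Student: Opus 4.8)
The plan is to treat this as a bit-accounting identity: the compression ratio is by definition the number of bits required to transmit the uncompressed received signal matrix divided by the number of bits required for the compressed representation produced by Algorithm 1, including the side information needed for reconstruction at the BBU. So I would simply count the entries of every object that crosses the fronthaul and multiply by the appropriate per-sample cost.

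First I would fix the numerator. Before any processing, the raw time-domain matrix $\mathbf{Y}$ has dimension $N \times N_r$ and therefore contains $N N_r$ complex baseband samples; quantizing each to $b_Q$ bits gives $N N_r b_Q$ bits, which is exactly the cost of shipping $\mathbf{Y}$ with no compression. I would make explicit here that the comparison is between this uncompressed time-domain matrix and the compressed frequency-domain payload, so that the CP and guard-band samples discarded at split~C are legitimately absent from the compressed side and are not charged to it.

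Next I would count the compressed payload user by user. From Algorithm~\ref{alg:loop}, for user $u$ the output is $\mathbf{Q_u}=[\mathbf{q_1},\dots,\mathbf{q_{L_u}}]$ and $\mathbf{R_u}=[\mathbf{r_1},\dots,\mathbf{r_{N_r}}]$. Each basis vector $\mathbf{q_i}$ has length $N_{f_u}$ (the number of sub-carriers allotted to user $u$), so $\mathbf{Q_u}$ holds $L_u N_{f_u}$ complex entries; each $\mathbf{r_i}$ stores the $L_u$ projection coefficients onto the chosen basis, so $\mathbf{R_u}$ holds $L_u N_r$ entries. Quantizing at $b_Q$ bits per complex sample and summing over the $N_u$ users yields $\sum_{u=1}^{N_u} L_u(N_{f_u}+N_r)\,b_Q$, which is the first term of the denominator.

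Finally I would account for the side information, and this is the step I expect to require the most care. Because the algorithm reorders the $N_r$ antenna columns—selecting the maximum-norm columns as basis vectors and projecting the rest—the BBU cannot reconstruct $\mathbf{Y_{u_0}}=\mathbf{Q_u}\mathbf{R_u}$ unless it is told which physical antenna each of the $N_r$ columns corresponds to. Encoding one index per column costs $\log_2 N_r$ bits, so $N_r\log_2 N_r$ bits per user and $N_u N_r\log_2 N_r$ bits overall; I would justify that this per-column index encoding (rather than, say, $\log_2\binom{N_r}{L_u}$ for the unordered basis subset or $\log_2(N_r!)$ for the full permutation) is the intended signalling, giving the second denominator term. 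Adding the payload and side-information counts and dividing the uncompressed count by this sum produces the claimed expression for $CR$.
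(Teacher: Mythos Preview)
Your proposal is correct and follows essentially the same bit-accounting argument as the paper: numerator from the $N\times N_r$ time-domain matrix at $b_Q$ bits per sample, denominator from the per-user $\mathbf{Q_u}$ ($N_{f_u}\times L_u$) and $\mathbf{R_u}$ ($L_u\times N_r$) matrices at $b_Q$ bits plus $N_r\log_2 N_r$ bits per user of column-index overhead. The paper's proof is identical in structure and conclusion, though it does not include your side remarks about the CP/guard-band accounting or alternative encodings of the permutation.
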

\begin{proof}
In the absence of any compression, the samples of the received signal matrix $\mathbf{Y}$ (of dimension $N\times N_r$) quantized to $b_Q$ bits are sent to the BBU via the fronthaul link. Therefore, the number of bits before compression, $B_{org}$ is given by
\begin{equation}
B_{org} = NN_{r}b_{Q}.
\end{equation}
During compression, $\mathbf{Y}$ is converted to the frequency domain and divided into sub-matrices $\mathbf{Y_u}$ corresponding to each user $u$. $\mathbf{Y_u}$ is of dimension $N_{f_{u}}\times N_r$, where $N_{f_{u}}$ is the number of sub-carriers allotted to user $u$. Each $\mathbf{Y_u}$ is then approximated to the product of the matrices $\mathbf{Q_u}$ and $\mathbf{R_u}$ by QR approximation. $\mathbf{Q_u}$ is of dimension $N_{f_u} \times L_u$ and $\mathbf{R_u}$ of dimension $L_u \times N_r$, where $\rank(\mathbf{Y_u}) \leq L_u \leq N_r$. Uniform quantization of $b_{Q}$ bits is applied to the samples of each $\mathbf{Q_u}$ and $\mathbf{R_u}$. Therefore, the number of bits after compression, $B_{cmp}$ is given by
\begin{equation}
B_{cmp} = \sum_{u=1}^{N_u}(N_{f_u} L_u b_{Q} + L_u N_r b_{Q})
=\sum_{u=1}^{N_u}L_{u}(N_{f_u}+N_{r})b_{Q}.
\end{equation}
The order in which the columns of $\mathbf{Y_u}$ were chosen to construct $\mathbf{Q_u}$ and $\mathbf{R_u}$ also need to be sent for proper reconstruction of $\mathbf{Y_u}$ at the BBU. Since we need $\log_{2}N_r$ bits to represent the index of each of the $N_r$ antennas in $\mathbf{Y_u}$, this amounts to an overhead of $B_{ovh}$ bits given by
\begin{equation}
B_{ovh} = N_{u}N_{r}\log_{2}N_r.
\end{equation}
Thus, combining (4), (5) and (6), the fronthaul CR is
\begin{equation}
CR = \frac{B_{org}}{B_{cmp}+B_{ovh}}\hspace{0.1cm},
\end{equation}
which gives us (3).
\end{proof}

\par{At the BBU side, we reconstruct all the $\mathbf{Y_{u_0}}$ by taking the product of the corresponding $\mathbf{Q_{u}}$ and $\mathbf{R_{u}}$. For decoding, assuming the channel is known at the BBU, we perform zero-forcing equalization on each $\mathbf{Y_{u_0}}$. This is followed by joint decoding where we combine each user's symbols from all the $N_r$ antennas and demodulate the M-QAM symbols.}

\par{We can show that the computational complexity of the compression algorithm described in Algorithm 1 is $\mathcal{O}(N_{f}N_{r}L_{u})$. On the other hand, the SVD compression in \cite{Choi2016} applied to the time-domain signal matrix $\mathbf{Y}$ has a higher complexity of $\mathcal{O}(NN_{r}^{2})$, assuming $N_r < N$ \cite{kishore2017}. Even if the SVD algorithm were to be applied to each user sub-matrix $\mathbf{Y_u}$ after removing the CP and guard bands, its complexity is $\mathcal{O}(N_{f}N_{r}^{2})$, which is $N_{r}/L_{u}$ times higher than the complexity of our algorithm. When $L_{u} \ll N_{r}$, the complexity of our algorithm is significantly lesser than that of the SVD algorithm.} 

\begin{figure}[t!]
\vspace{3mm}
    \centering
    \includegraphics[scale=0.37]{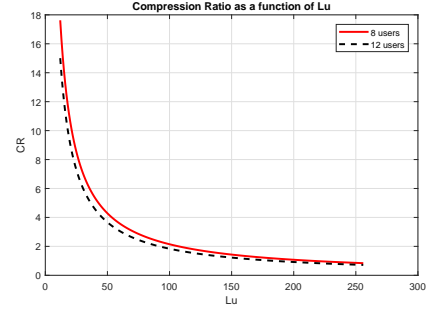}
    \caption{Compression Ratios (CRs) for the proposed method as a function of $L_u$, for 256 RRH antennas, 8 users and 12 users. The CR is inversely proportional to $L_u$, as observed from (3).}
    \label{fig:my_label}
\end{figure}

\section{Simulation Results}
We evaluate the Bit Error Rate (BER) performance of the proposed compression algorithm through Monte Carlo simulations using a massive MIMO uplink link level simulator in the baseband. We use the 3GPP tapped delay line (TDL) Rayleigh fading channel model for 5G, TDLA30\cite{38104}. We consider 100MHz bandwidth and 30kHz sub-carrier spacing, for which the FFT length is 4096 and CP length 288. Therefore, the length of one OFDM symbol is 4384, which is the number of time samples $N$ that we consider for one compression block. We use 64-QAM with 256 receive antennas at RRH. Thus, the dimension of the received signal matrix $\mathbf{Y}$ that is to be compressed is 4384$\times$256. The simulation parameters are summarized in Table I. 

\begin{table}[b!]
\centering
\begin{tabular}{ |p{3.5cm}||p{4cm}|  }
 \hline
 Modulation scheme & 64-QAM \\\hline
 No. of RRH antennas & 256\\\hline
 Bandwidth & 100 MHz\\\hline
 IFFT size & 4096\\\hline
 CP length & 288\\\hline
 Max. no. of RBs & 273\\\hline
 Size of one RB & 12\\\hline
 Channel model & TDLA30\\\hline
 Channel equalization & Zero forcing\\\hline
 Quantization & 15-bit uniform\\
  & (30 bits/complex sample)\\\hline
\end{tabular}
\caption{Simulation Parameters}
\end{table}

\begin{table}[b!]
\centering
\begin{tabular}{|l||*{3}{c|}}\hline
\backslashbox{$L_u$}{$N_u$}
&\makebox[3em]{8}&\makebox[3em]{12}\\\hline\hline
\makebox[3em]{12}&\makebox[3em]{17.4}&\makebox[3em]{14.5}\\\hline
\makebox[3em]{24}&\makebox[3em]{8.9}&\makebox[3em]{7.3}\\\hline
\end{tabular}
\caption{Achieved Compression Ratios for QR compression}
\end{table}

\begin{figure*}[ht!]
  \centering
  \mbox{
    \subfigure[\label{subfigure label}]{\centering\includegraphics[width=.4\linewidth]{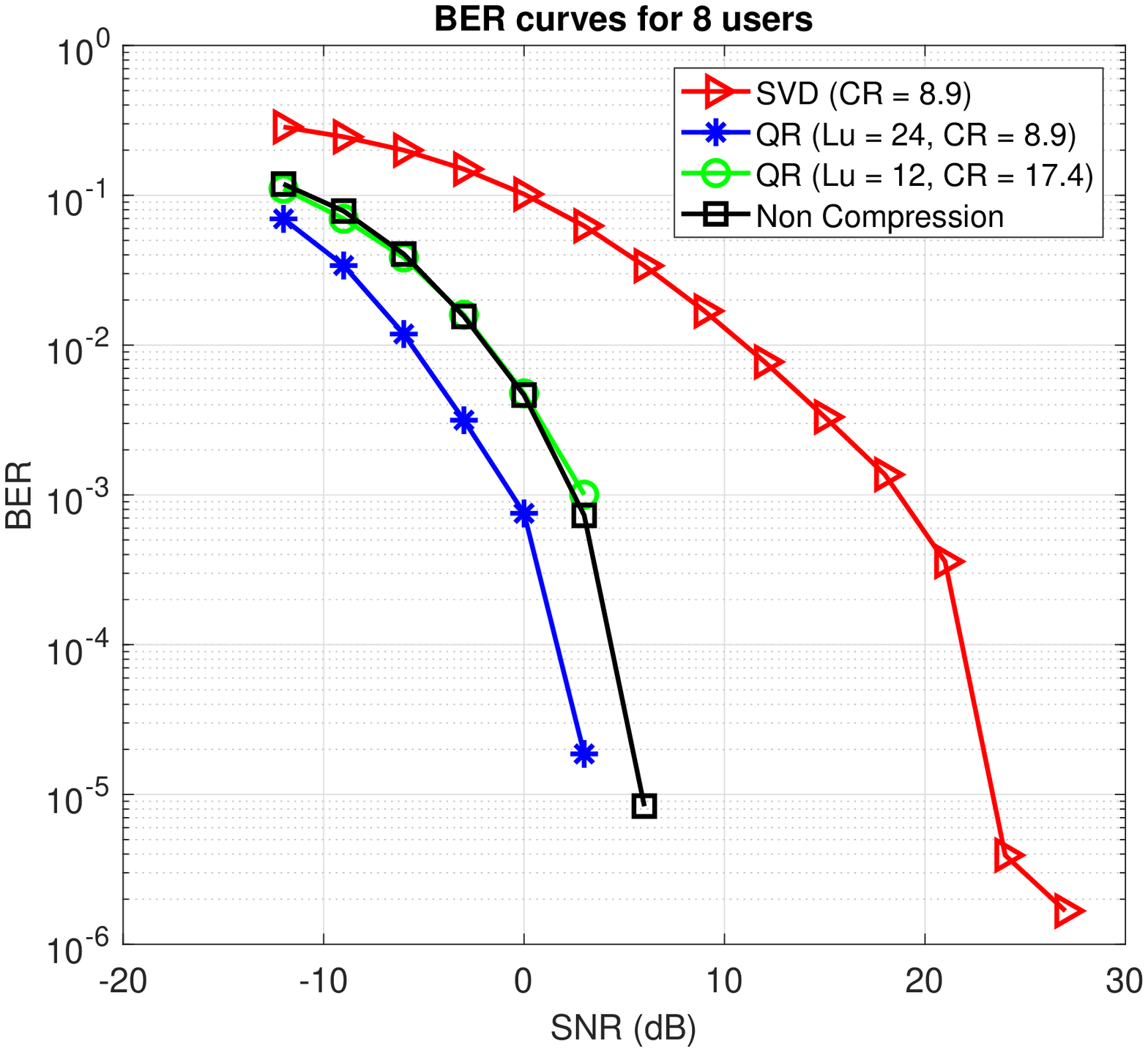}}\quad
    \subfigure[\label{subfigure label}]{\centering\includegraphics[width=.403\linewidth]{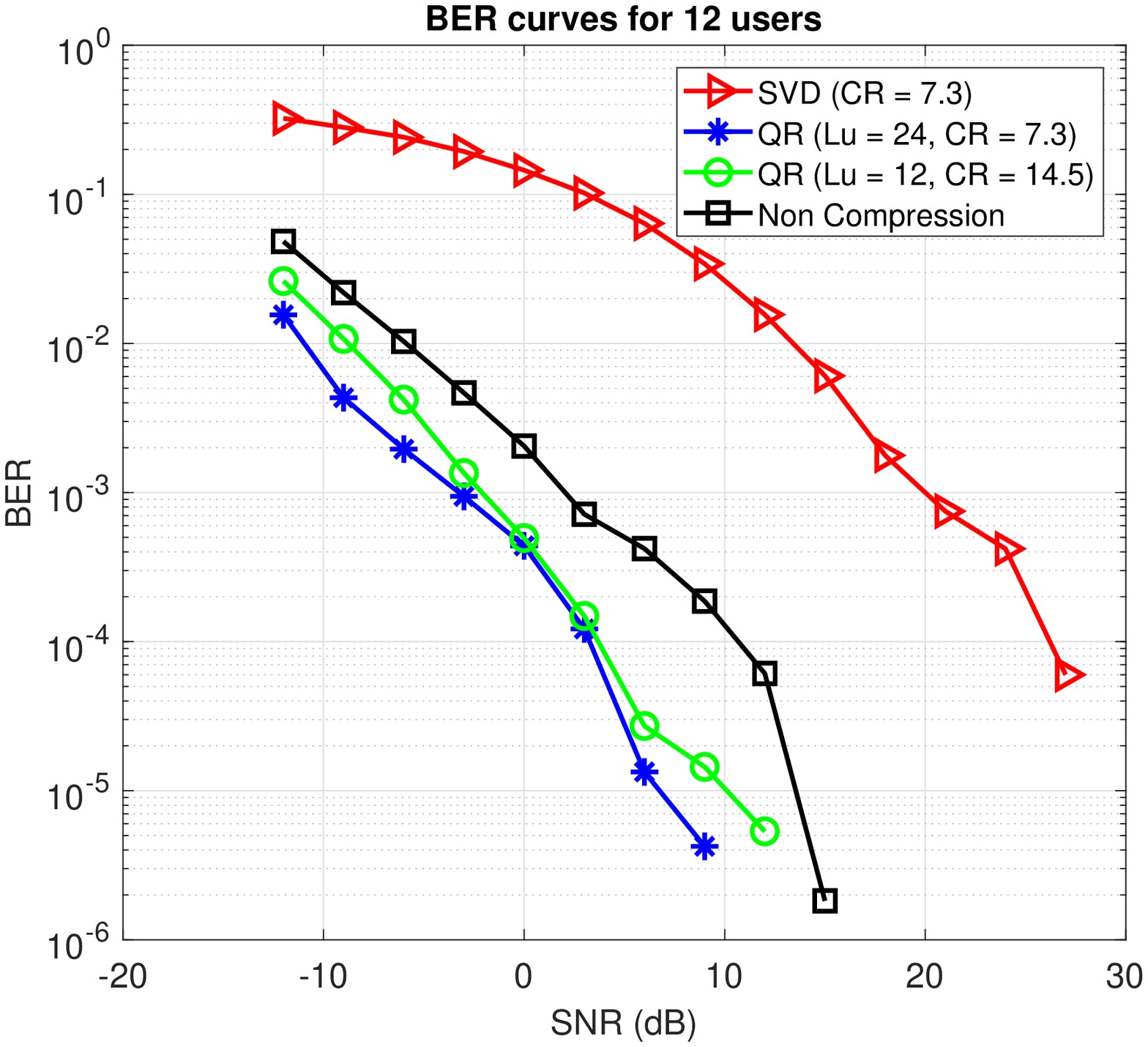}}\quad
    
  }
  \caption{Uncoded BERs of the proposed method with $L_u$= 12 and 24 compared with SVD compression and no compression, for (a) 8 users and (b) 12 users, for 256 RRH antennas with correlation coefficient 0.7 in the exponential correlation model.}
  \label{main figure label}
\end{figure*}

\par{We compare the BER performance of the proposed compression method against the SVD compression in \cite{Choi2016} and no compression, for an uncoded system. Assuming uniform linear array, we generate antenna correlation at the RRH according to the exponential correlation model \cite{loyka2001} with correlation coefficient 0.7, used for the compression proposed in \cite{Choi2016} for comparison. We consider two cases based on the number of users, (a) 8 users and (b) 12 users. The users are allocated resource blocks (RBs) according to their received SNRs at the RRH. The users with higher SNRs are allocated more RBs than those with lower SNRs. For 8 users, after arranging the users in the increasing order of their received SNRs at the RRH, the number of RBs allocated to them are 26, 28, 30, 32, 34, 36, 38, 40, respectively. Similarly, for 12 users, the RB allocation is 10, 12, 14, 16, 18, 20, 22, 24, 26, 28, 30, 32, respectively. In both the cases, the total number of RBs allocated should not exceed 273 as specified in \cite{38101}.}

\par{The true rank of each user sub-matrix in the frequency domain, $\mathbf{Y_u}$, is found to be 12, corresponding to the 12 taps in the multi-path channel model used. Fig. 4 shows the CRs achieved for different values of $L_u$. We see that lower the value of $L_u$, higher the CR achieved. In order to evaluate the impact of $L_u$ on the performance of the algorithm, we show the BER plots for 2 values of $L_u$, 12 and 24 in Fig. 5(a) (for 8 users) and Fig. 5(b) (for 12 users). We see that the algorithm performs better for the higher value of $L_u$. Thus, the choice of $L_u$ in our algorithm is a trade-off between the desired compression ratio (CR) and the required error performance. Table II shows the CRs achieved for $L_u$ = 12 and 24 for different number of users. We compare the performance of our algorithm against SVD compression in \cite{Choi2016} for compression ratio of 8.9 for 8 users (Fig.5(a)) and 7.3 for 12 users (Fig.5(b)). The SVD compression is applied to the time-domain received signal matrix $\mathbf{Y}$, as proposed in \cite{Choi2016}. The true rank of $\mathbf{Y}$ is 12 multiplied by the number of users, i.e, 96 for 8 users and 144 for 12 users. Therefore, in order to achieve the same CR as our algorithm, we need to reduce the total number of bits allocated for the samples in the SVD compression, which degrades its performance. Thus, we observe that our method performs better than the SVD method for both the 8 user and 12 user cases. We also plot the BER for the uncompressed system in both user cases. We see BER improvement for our method compared to the non-compression case due to the denoising gain of the low-rank approximation we apply \cite{zhang2009}.}

\section{Conclusion}
In this paper, we proposed a data compression scheme for massive MIMO fronthaul that combines an intra-PHY layer functional split between the BBU and RRH, and a dimension reduction algorithm at the RRH based on low rank QR approximation. Through link level simulations, we showed that the proposed method achieves 17.4$\times$ compression for the 8 user case and 14.5$\times$ compression for the 12 user case. In both the cases, the performance of the proposed scheme is better than no compression and the SVD compression in \cite{Choi2016}. The proposed method has a lower computational complexity than the SVD method and also has a denoising effect that improves its error performance.

\bibliographystyle{IEEEtrans}
\bibliography{ref}



\end{document}